\newtheorem{definition}{Definition}
\newtheorem{proposition}{Proposition}
\newtheorem{remark}{Remark}
\newcommand{\aproof}{\hfill{\ding{111}}}
\newenvironment{proof}{\noindent{\sc{Proof:}}\newline}{\aproof}
\def\keywords{\vspace{.5em} 
{\textit{Keywords}:\,\relax%
}}
\author{L\'aszl\'o Csat\'o\thanks{Department of Operations Research and Actuarial Sciences, Corvinus University of Budapest, \newline
F\H{o}v\'{a}m t\'er 8., 1093 Budapest, Hungary \newline
e-mail: laszlo.csato@uni-corvinus.hu}}
\title{Ranking by pairwise comparisons for Swiss-system tournaments}
\date{\today}
\begin{document}

\maketitle

\begin{abstract}
Pairwise comparison matrices are widely used in Multicriteria Decision Making. This article applies incomplete pairwise comparison matrices in the area of sport tournaments, namely proposing alternative rankings for the 2010 Chess Olympiad Open tournament. It is shown that results are robust regarding scaling technique. In order to compare different rankings, a distance function is introduced with the aim of taking into account the subjective nature of human perception. Analysis of the weight vectors implies that methods based on pairwise comparisons have common roots. Visualization of the results is provided by Multidimensional Scaling on the basis of the defined distance. The proposed rankings give in some cases intuitively better outcome than currently used lexicographical orders.

\keywords{Multicriteria decision making, Incomplete pairwise comparison matrix, Ranking for Swiss-system tournaments, Multidimensional Scaling}
\end{abstract}

\section{Introduction}
\label{Intro}

There are some sport tournaments where the ranking of the competitors is based on the results of games played against each other. In the world of sport there is no consensus in using a particular ranking method. Various evaluation methods have been applied to different events taking into account the specifics of the particular sport activity. Further complication emerges when participants are teams, and the final ranking should also reflect individual results. One characteristic example for that situation is a chess team championship.

In this article it is assumed that the final ranking of participants of a sport tournament is based on the outcome of the games that have been played against each other. The result of a game is given according to the rules of the particular sport, however, reasonable data transformation is also allowed. Ranking of the participants will be made by applying the pairwise comparison models of Multicriteria Decision Making methodology. It has two main advantages: the obvious interpretation of the games played against each other and the ability to address the problem of intransitivity (cyclical preferences regarding three alternatives: A is better than B, B is better than C, but C is better than A), a common feature of subjective evaluations by individuals, but a well-known phenomenon in objective sport results, too.

The 'alternatives' to be compared are the participants of the tournament, their game results will be incorporated into a pairwise comparison matrix, and the final result will follow the ranking of the priority vector derived from the matrix with an estimation method. It is worth mentioning that this approach can be applied not only for some real competitions (national soccer, basketball, hockey, etc. championships, chess tournaments), but for the ranking of teams or individuals having historic data of their rivalry (tennis or chess players, athletes).

Although the application of complete pairwise comparison matrices (i.e. everybody met all the other competitors at least once) can raise interesting questions to be analyzed (comparison of the 'official' rankings and those obtained by using the pairwise comparison method, for instance), the focus of this article is to use incomplete pairwise comparison matrices for the final ranking. This is the reason why the analyzed case is a Chess Olympiad: it is an ideal example of the potential application of incomplete pairwise comparison matrices.

Chess competitions are often organized as a Swiss-system tournament. All participants face each other for a determined number of games (rounds are often organized at the same time), while there is no knockout phase. It means that a loss in the first rounds does not make impossible for the contestant to win the championship, as well as all participants play the same number of matches. The final rank of the players (teams) is determined mainly by the application of different lexicographical orders, but it lacks a well established solution for determining the final ranking -- taking into account the performance of opponents of a team is a central issue as not all possible matchings materialized, which poses a serious challenge. It will be shown that an alternative way based on the results against each other could give a ranking which is in some sense intuitively better than the currently used methods. The more detailed analysis will take the results of the 39th Chess Olympiad Open tournament as a basis. The competition took place from September 20th to October 3rd, 2010 in Khanty-Mansiysk, Russia.

This choice was supported by the following arguments:
\begin{itemize}
	\item{It was an important sport event recently;}
	\item{Results were easy to collect and to adapt to the pairwise comparison method;}
	\item{Not only the winner, but the position of other participants were of interest;}
	\item{The reciprocity of the pairwise comparison matrix was a reasonable assumption;}
	\item{Participants were interested in the size of their win or lose;}
	\item{The official ranking method is debated.}
\end{itemize}
All of these issues will be thoroughly discussed.

The article is organized as follows. In Section \ref{Theory} the applied theory and some methods for the consistent approximation of incomplete pairwise comparison matrices will be described. Section \ref{Adapt} deals with the representation of the tournament results in a pairwise comparison format. Special features of the chess olympiad which make the necessary assumptions reasonable will also be presented here. Section \ref{Rankings} analyzes different rankings with a focus on comparing the official result with those of calculated from pairwise comparisons. For this purpose a distance measure will be introduced. Section \ref{Characteristics} highlights some interesting properties of the example. Section \ref{Visualization} uses Multidimensional Scaling (MDS) to draw the rankings on a two-dimensional map based on the distance defined in Section \ref{Rankings}. It reveals that the different rankings obtained from pairwise comparisons are close to each other. In Section \ref{Conclusion} the summary of the results and the outline of further research are given.

\section{Incomplete pairwise comparison matrices}
\label{Theory}

Pairwise comparisons have been widely used in decision making since Saaty published the AHP method \cite{Saaty1980}.

In a Multicriteria Decision Making problem the $n \times n$ real matrix $\mathbf{A}  = ( a_{ij} )$ is a pairwise comparison matrix if it is positive and reciprocal that is $a_{ij} > 0$ and $a_{ji} = 1/a_{ij}$ for all $i,j = 1,2, \dots ,n$. The reciprocity condition also means that $a_{ii} = 1$ for all $i = 1,2, \dots ,n$.

Matrix element $a_{ij}$ is the numerical answer to the question 'How many times is the $i$th alternative more important/better/favourable than the $j$th?' The final aim of the use of pairwise comparisons is to determine a weight vector $\mathbf{w} = (w_i)$ for the alternatives, where $w_i/w_j$ somehow approximates the pairwise comparison $a_{ij}$. The solution is obvious if matrix $\mathbf{A}$ is consistent, namely $a_{ik} = a_{ij}a_{jk}$ for all $i,j,k = 1,2, \dots ,n$, because there exists a positive $n$-dimensional vector $\mathbf{w}$ such that $a_{ij} = w_i/w_j$ for all $i,j = 1,2, \dots ,n$.

In the inconsistent case the real values of the decision maker can only be estimated. Saaty proposed the Eigenvector Method (EM) for this purpose, which is based on the Perron theorem \cite{Perron1907}, as a positive matrix has a dominant eigenvalue with multiplicity one and an associated strictly positive eigenvector.

Distance-minimization techniques, like Logarithmic Least Squares Method (LLSM) \cite{CrawfordWilliams1980, CrawfordWilliams1985, DeGraan1980} minimize the function $\sum_i \sum_j d(a_{ij}, w_i/w_j)$ where $d(a_{ij}, w_i/w_j)$ is the proper difference of $a_{ij}$ from its approximation $w_i/w_j$ (in case of LLSM $d(a_{ij}, w_i/w_j) = (\log a_{ij} - \log (w_i/w_j))^2$).

It may happen that a subset of pairwise comparisons are unknown due to the lack of available data, uncertain evaluations, or other problems. Incomplete pairwise comparison matrices were introduced in \cite{Harker1987}, for example (missing elements are denoted by $*$):

\[
 \mathbf{A}=
\left(
 \begin{array}{cccc}
 1		&   *       	&   a_{13}  	&   a_{14}  \\
 * 		&   1       	&   a_{23}  	&   *		\\
1/a_{13}&   1/a_{23}	&   1       	&   a_{34}  \\
1/a_{14}&   *       	&   1/a_{34}	&   1		\\
\end{array}
\right).
\]

In order to handle incomplete pairwise comparison matrices, introduce the variables \linebreak
$x_1, x_2, \dots ,x_d \in \mathbb{R}_+$ for the $d$ missing elements in the upper triangle of $\mathbf{A}$ supposing reciprocity (in all there are $2d$ unknown entries in the matrix). The new matrix is denoted by $ \mathbf{A(\mathbf{x})}$ as its elements are the functions of the variables:

\[
 \mathbf{A(x_1,x_2)}=
\left(
 \begin{array}{cccc}
1		&   x_1       	&   a_{13}  	&   a_{14}  \\
1/x_1	&   1       	&   a_{23}  	&   x_2		\\
1/a_{13}&   1/a_{23}	&   1       	&   a_{34}  \\
1/a_{14}&   1/x_2      	&   1/a_{34}	&   1		\\
\end{array}
\right).
\]

(In)complete pairwise comparison matrices can be represented by graphs \cite{Gass1998, Keri2011}. Let $\mathbf{A}$ be a pairwise comparison matrix of size $n \times n$. Then $G := (V,E)$, where $V = \{ 1,2, \dots ,n \}$ , vertices correspond to the alternatives, and $E = \{ e(i,j): a_{ij} \textrm{ is known and } i \neq j \}$, thus $E$ represents the structure of known elements.

A recent result in this field is the extension of EM and LLSM to the incomplete case \cite{BozokiFulopRonyai2010}. The EM solution is arising from the idea that inconsistency could be measured by the maximal eigenvalue, so it is a natural approach to choose the unknown elements for minimizing the dominant eigenvalue. Therefore the task is: $\lambda_{max} \left( \mathbf{A}(\mathbf{x}) \right) \rightarrow \min$, where $\mathbf{x} = (x_1, x_2, \dots ,x_d)^T \in \mathbb{R}_+^d$. As proved in \cite{BozokiFulopRonyai2010}, the solution is unique if and only if graph $G$ is connected, thus two alternatives could be compared via directly or indirectly through other alternatives. An algorithm is also proposed for the $\lambda_{max}$-optimal completion based on cyclic coordinates.

Another opportunity is the LLSM method. In this case only the known elements of $\mathbf{A}$ are examined: $\{ \sum_{e(i,j) \in E} \left[ \log a_{ij} - \log(w_i/w_j) \right]^2 \} \rightarrow \min$. The solution is unique again if and only if graph $G$ is connected, thus it depends strictly on the position of known elements, not on their exact values \cite{BozokiFulopRonyai2010}. Calculation of the optimal weights requires only the inverse of the upper-left $(n-1) \times (n-1)$ submatrix in the $\mathbf{L}$ Laplace-matrix of graph $G$ and some additional matrix multiplication. The problem can be solved quickly by any commonly used office softwares, like MS Excel.

\section{Application to the Chess Olympiad}
\label{Adapt}

An example for an incomplete pairwise comparison matrix is the result matrix of a chess olympiad described in the introduction. In the 39th Chess Olympiad 2010 Open tournament officially 149 teams participated. All of them played 11 matches, except for some teams due to their late arrivals.\footnote{Results are available in \cite{Results2010} in MS Excel or pdf-files. The data were organized into pairwise comparison matrices by the author; these are available on request.}

The matrix consists of 810 results from played matches, the ratio of known elements is small (7.3\%) as 11,026 elements is placed in the upper triangle of a $149 \times 149$ matrix. The diagonal elements of the matrix contains unities. There was no significant tendency in number of draws during the 11 rounds and it is the least frequent result. It implies that weaker teams played no 'simple' draws in the final rounds, when they still had no chances to reach a good position. This is important in chess, as draws could have arranged by the mutual agreement of players. However, that possibility is not totally excluded. The distribution of results is presented in Figure \ref{fig1}.

\begin{figure}[htbp]
\centering
\includegraphics[width = 10cm]{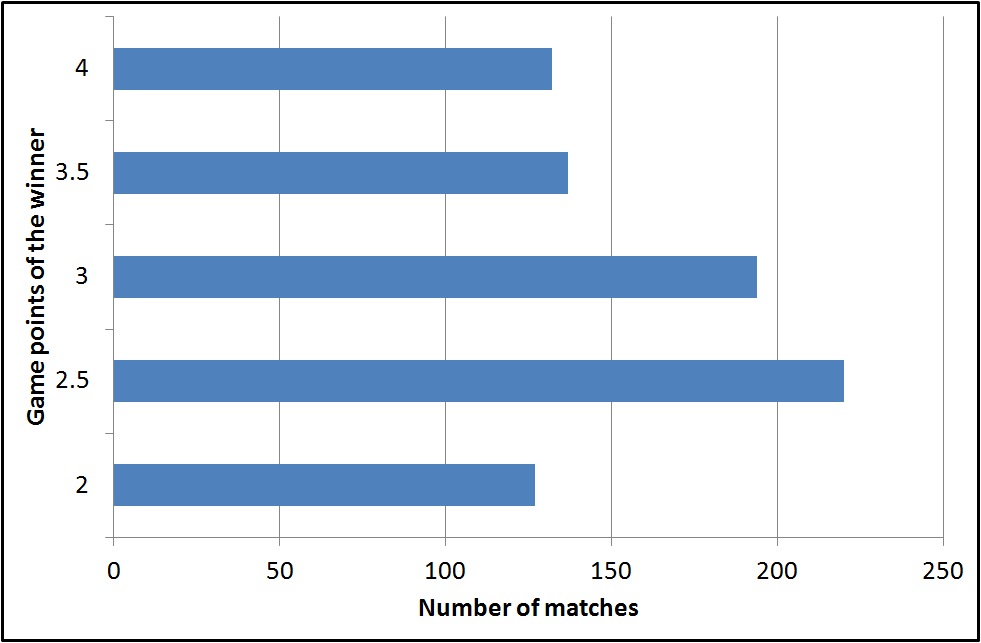} 
\caption{Distribution of game points of the winner (draws appear as 2)}
\label{fig1}
\end{figure}

In the tournament, all teams constituted by 5 players (4 and 1 reserve), a match between two of them includes 4 games with 3 different results (white win or loss, draw). In both teams 2 players have white, implying symmetricity as the chances of the match are a priori equal.\footnote{This is not true in individual chess tournaments, which is an important argument for examining the olympiad and similar team championships. Maybe the 5 Russian teams enjoyed some advantages due to the location, but it is not significant. In some sports it is false to suppose the reciprocity of the pairwise comparison matrix, like in soccer (home or away) or tennis (hard, carpet, clay or grass). The examination of pairwise comparison matrices without the reciprocity condition is out of the scope of the paper.} The winner of the game gets 1 game point, the loser 0, while draw means 0.5 game points for each player. Therefore the final result of the match for one team, the sum of players' game points, ranges from 0 to 4, by 0.5. If a team achieves in the match minimum 2.5 game points, it gets 2 match points. If it scores 2 game points (thus its opponent has likewise 2 game points), it gets 1 match point. If it secures 1.5 or fewer game points, it counts as 0 match points for this match. The sum of allocated match points is always 2.

The official ranking method is a lexicographical order determined by the application of 4 tie-breaking procedures in sequence, proceeding from TB1 to TB4 to the extent required.\footnote{\label{footnote5}The official ranking method is available in \cite{Handbook2010}. The position of teams that finish with the same number of match points shall be determined by application of the following tie-breaking procedures in sequence, proceeding from TB1 to TB4 to the extent required:
\begin{itemize}
	\item{TB1: Number of match points.}
	\item{TB2: Olympiad Sonneborn-Berger points. The number of each opponent's match points, excluding the opponent who scored the lowest number of matchpoints, multiplied by the number of game points achieved against this opponent.}
	\item{TB3: Number of game points. Their sum is always 4 in one match.}
	\item{TB4: Buchholz points. The sum of the match points of all the teams opponents, excluding the lowest one.}
\end{itemize}}
The primary criteria of the lexicographical order is TB1, the number of match points. However, as at most 22 match points could be scored in 11 matches, TB2 rules certainly should be applied, which means that teams are strongly interested in increasing their game points, players cannot be satisfied with a simple 2.5:1.5 win. Consequently, the size of wins reliably reflects the difference of teams' performance and it is justified to give higher weights to bigger wins. This is not the case in a lot of sport events.\footnote{It is true for tennis where players want to end the match as soon as possible, but not in football where goal difference usually does not count so much.}

It was presented that match results correspond to the main conception of ratios used in pairwise comparison matrices, and they could be transformed into values (ratios). Draws (2:2) are obviously converted to 1, but in the other 4 sort of results  4 different rules were applied (reciprocity ensures that it is enough to see the results from the viewpoint of the winner). The variants are presented in Table \ref{table2}.

\begin{table}[htbp]
\centering
\caption{Transformation of match results into pairwise comparison values (ratios)}
\label{table2}
\begin{tabular}{l|c|cccc}
    Game points & Number of matches & A variant & B variant & C variant & D variant \\ \hline
    0     & 132   &  1/5  &  1/8  &  1/3  &  1/5 \\
    0.5   & 137   &  1/4  &  1/6  &  2/5  &  1/4 \\
    1     & 194   &  1/3  &  1/4  &  1/2  &  2/7 \\
    1.5   & 220   &  1/2  &  1/2  &  2/3  &  1/3 \\
    2     & 127   & 1     & 1     & 1     & 1 \\
    2.5   & 220   & 2     & 2     & 1.5   & 3 \\
    3     & 194   & 3     & 4     & 2     & 3.5 \\
    3.5   & 137   & 4     & 6     & 2.5   & 4 \\
    4     & 132   & 5     & 8     & 3     & 5 \\
\end{tabular}
\end{table}

The size of wins weighs most in B variant, while in D ratios are not heavily effected by the number of game points. C variant contracts the scale, while A variant means the baseline scenario.

Now a method should be chosen to solve the problem, namely, to find a weight vector which approximates the incomplete pairwise comparison matrix relatively well. As it was mentioned in the previous section, EM and LLSM will be used. From a computational point of view, LLSM seems to be more favourable: the algorithm proposed by \cite{BozokiFulopRonyai2010} for the calculation of the right eigenvector is an iteration process in which a lot of univariable minimization problems should be solved. Therefore, in Section \ref{Rankings} the EM ranking will be discussed only for C variant. It was checked that the EM for other variants gives similar results to the LLSM, their difference is negligible compared to the official final ranking.

The appropriate result has another criteria, that graph $G$ corresponding to the pairwise comparison matrix should be connected. It depends only on the position of known elements, which is the same for the 4 variants. It cannot be decided in advance, but the solution of \cite{BozokiFulopRonyai2010} operates only for connected graphs. The pairing algorithm described in \cite{Handbook2010} suggests it will be connected and it was proved by the implementation of LLSM.

\section{Comparison of different rankings}
\label{Rankings}

The approximation of the incomplete pairwise comparison matrix is a 149-dimensional positive vector with normalized weights: $\mathbf{w} = (w_1, w_2, \dots , w_{149})^T \in \mathbb{R}^{149}_+$ where $\sum_{i=1}^{149} w_i = 1$. Elements are difficult to evaluate because of the large size of vectors, however, some features of the optimal weights will be discussed in Section \ref{Characteristics}. This section focuses on the relation of different rankings, gained from ordering of the elements of the priority vectors (full lists of countries sorted by positions according to the final ranking are presented in Table \ref{table10}). In the following, A-LLSM/B-LLSM/C-LLSM/D-LLSM corresponds to the ranking derived from the LLSM method for A/B/C/D variants. C-EM is the ranking obtained by the right eigenvector (EM) for C variant. Final ranking is the official final result of the Chess Olympiad. Start serves as a reference: it is the ranking of teams before the first round of the tournament based on the average FIDE (F\'ed\'eration Internationale des \'Echecs or World Chess Federation) rating of a team's players, which reflects their former performance.\footnote{See details in \cite{Handbook2010}.}

One of the most known index for comparing rankings is Spearman's rank correlation coefficient \cite{Spearman1904}. Let $X_i$ denote the rank of alternative $i$ according to ranking $X$ and $Y_i$ denote the rank of alternative $i$ according to ranking $Y$, then Spearman's rank correlation coefficient is:
\[
\rho = 1 - \frac{6 \sum_{i=1}^n (X_i-Y_i)^2}{n(n^2-1)}
\]
where $n$ is the maximal rank number. $\rho$ is the element of $\left[ -1,1 \right]$. These limits are reached when the two rankings are the same ($+1$) or entirely opposite ($-1$). $\rho = 0$ signs that there is no relation between the two rankings. $\rho^2$ can be interpreted as the fraction of variance shared between the two rankings. In this case there are no ties.\footnote{The official Start and Final rankings exclude ties by definition. Vectors approximating pairwise comparison matrices have no equal coordinates.}

Rank correlations are collected into a symmetric matrix, where the element of the $i$th row and the $j$th column is the Spearman's rank correlation coefficient between the corresponding rankings (see Table \ref{table3}).

\begin{table}[htbp]
\centering
\caption{Spearman's rank correlation coefficients between rankings}
\label{table3}
\begin{tabular}{l|ccccccc}
          & Start & Final & A-LLSM & B-LLSM & C-LLSM & D-LLSM & C-EM \\ \hline
    Start & 1     & 0.9353 & 0.9683 & 0.9684 & 0.9686 & 0.9654 & 0.9680 \\
    Final & 0.9353 & 1     & 0.9688 & 0.9686 & 0.9689 & 0.9699 & 0.9690 \\
    A-LLSM & 0.9683 & 0.9688 & 1     & 0.9997 & 0.9998 & 0.9987 & 0.9998 \\
    B-LLSM & 0.9684 & 0.9686 & 0.9997 & 1     & 0.9998 & 0.9978 & 0.9998 \\
    C-LLSM & 0.9686 & 0.9689 & 0.9998 & 0.9998 & 1     & 0.9983 & 0.9999 \\
    D-LLSM & 0.9654 & 0.9699 & 0.9987 & 0.9978 & 0.9983 & 1     & 0.9983 \\
    C-EM & 0.9680 & 0.9690 & 0.9998 & 0.9998 & 0.9999 & 0.9983 & 1 \\
\end{tabular}
\end{table}

The minimal $\rho$ among the 5 proposed rankings based on pairwise comparisons (in the $5 \times 5$ submatrix of Table \ref{table3}) is much bigger than 0.99: one of them explains the others at least in 99.5\% according to the interpretation of $\rho^2$. It seems to be a high value, however, there are some differences between them. For example, France is 6th in D-LLSM, while 10th in B-LLSM, or Jordan is 67th in D-LLSM, but 73rd in B-LLSM.

Rank correlations among the Final and the proposed rankings are above 0.96, the knowledge of the official result decrease the uncertainty regarding one of the latters at least by 93\% suggested by $\rho^2$. Nevertheless, great part of this is a simple fiction: an expert totally uninformed about this championship could still estimate the final result with more than 86\% certainty. It seems that Start  is nearer to LLSM rankings than to Final. It is not positive as the tournament will lose its curiosity if there are no surprises. On the other hand, the official method distorts in the direction of weaker teams confirmed by subsequent observations.

Spearman's rank correlation coefficient was used because it is a commonly accepted measure. In this case it has some disadvantages: $\rho$ is based on the square of rank differences, which means its value is determined mainly by the positions of weaker teams: the sum of rank differences between Start and Final rankings is 35,864, from which Pakistan (occupying the 123rd and 62nd positions, respectively) accounts for 3,721, more than 10\%. However, for the public the first 20 teams are more important than teams from the 80th to 99th positions.

The contextual factors of the situation should also be considered. Now it is more satisfactory to place the best team in the first position than to place the worst contestant last, known as ceiling effect \cite{Tarsitano2008}. There are a lot of measures of rank correlation which takes into account similar nonlinear effects. It would be a logical solution to use weights, but it still enlarges the differences in ranks due to squares and the chosen measure is intended to be used for another purpose, to visualize the rankings, which implies a rank correlation index is not appropriate.

In order to compare the above defined rankings, these problems should be addressed with a method satisfying the following soft properties:
\begin{itemize}
	\item{It regards the allocation of first places more important than the positions of weaker teams;}
	\item{It is symmetric and strictly monotonically increasing at every point, it increases if teams are positioned far from each other regardless to their exact positions;}
	\item{It possibly avoids the enlarging effect of square differences.}
\end{itemize}
Only rankings without ties will be discussed, i.e. supposing that all ties were broken by an arbitrary method like alphabetical order. It is a reasonable assumption is case of sport tournaments.

\begin{definition}
The $\tau$ measure between rankings $X$ and $Y$ is
\[
\tau(X,Y) = \sqrt{\sum_{i=1}^n \left( \log \frac{X_i}{Y_i} \right)^2}.
\]
Due to the properties of logarithm, it is the log-Euclidean metrics of the two rankings \cite{Arsigny2006}.
\end{definition}

\begin{proposition}
$\tau$ measure is a distance.
\end{proposition}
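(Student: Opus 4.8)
The plan is to reduce the statement to the well-known fact that the Euclidean metric on $\mathbb{R}^n$ is a distance, by observing that $\tau$ is precisely the pull-back of that metric along the coordinatewise logarithm. For a ranking $X = (X_1, \dots, X_n)$ set $\varphi(X) := (\log X_1, \dots, \log X_n) \in \mathbb{R}^n$; this is well defined because every $X_i$ is positive, and $\varphi$ is injective because the logarithm is. Then, using $\log(X_i/Y_i) = \log X_i - \log Y_i$,
\[
\tau(X,Y) = \sqrt{\sum_{i=1}^n \left( \log \frac{X_i}{Y_i} \right)^2} = \sqrt{\sum_{i=1}^n \left( \log X_i - \log Y_i \right)^2} = \bigl\| \varphi(X) - \varphi(Y) \bigr\|_2 ,
\]
so $\tau(X,Y) = d_E\bigl(\varphi(X), \varphi(Y)\bigr)$, where $d_E$ denotes the usual Euclidean distance on $\mathbb{R}^n$.

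First I would dispatch the easy axioms. Non-negativity, $\tau(X,Y) \ge 0$, is immediate since $\tau$ is defined as the square root of a sum of squares. For the identity of indiscernibles, $\tau(X,Y) = 0$ forces $\log X_i = \log Y_i$ for every $i$, hence $X_i = Y_i$ for every $i$ by injectivity of the logarithm, i.e.\ $X = Y$; conversely $\tau(X,X) = 0$ trivially. Symmetry holds term by term, since $\left( \log(X_i/Y_i) \right)^2 = \left( \log(Y_i/X_i) \right)^2$, or equivalently because $d_E$ is symmetric.

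The only step carrying any content is the triangle inequality, and even that is routine once the Euclidean reformulation is in hand: applying the triangle inequality for the Euclidean norm --- equivalently Minkowski's inequality, which rests on Cauchy--Schwarz --- to the three points $\varphi(X), \varphi(Y), \varphi(Z)$ gives
\[
\tau(X,Z) = \bigl\| \varphi(X) - \varphi(Z) \bigr\|_2 \le \bigl\| \varphi(X) - \varphi(Y) \bigr\|_2 + \bigl\| \varphi(Y) - \varphi(Z) \bigr\|_2 = \tau(X,Y) + \tau(Y,Z) .
\]
Hence all the metric axioms are satisfied. The only point that needs a moment's care --- the ``main obstacle'', such as it is --- is purely bookkeeping about the domain: one must make sure that $\tau$ is considered on rankings \emph{without ties}, so that the entries of $X$ are genuine positive numbers and $\tau(X,Y) = 0$ really does imply $X = Y$; this is exactly the standing assumption recalled just before the definition.
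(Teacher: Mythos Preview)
Your proof is correct and follows exactly the paper's approach: both observe that $\tau(X,Y)$ is the Euclidean distance between the coordinatewise logarithms $\log(X)$ and $\log(Y)$, so the metric axioms are inherited from the Euclidean metric on $\mathbb{R}^n$. Your version is in fact more carefully written than the paper's (which defines the ``Euclidean distance'' without the square root and leaves the verification of the axioms entirely implicit), but the underlying idea is identical.
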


\begin{proof}
Denote $\epsilon (\mathbf{x},\mathbf{y}) = \sum_{i=1}^n (x_i - y_i)^2$ the standard Euclidean distance of vectors $\mathbf{x},\mathbf{y} \in \mathbb{R}^n$. Now $\tau(X,Y) = \epsilon(\log(X),\log(Y))$ where $\log(X)$ is a vector in $\mathbb{R}^n$ and $(\log(X))_i = \log X_i$. It means $\tau$ is a distance, too.
\end{proof}

It is not a new result, for a more general discussion see \cite{Norris2005}. In the following, it will be referred to as $\tau$ distance.

\begin{proposition}
$\tau$ has a maximum depending on the length of the ranking. $\tau$ has a unique maximum.
\end{proposition}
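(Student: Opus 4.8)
The plan is to reduce the optimization over pairs of rankings to a single permutation problem and then to invoke the rearrangement inequality. First I would note that, once ties are excluded, each ranking $X$ of $n$ alternatives is a bijection onto $\{1,2,\dots,n\}$, so $\tau(X,Y)^2 = \sum_{i=1}^n (\log X_i - \log Y_i)^2$ is the square of a sum of finitely many terms; since there are only finitely many pairs $(X,Y)$, a maximum trivially exists. The substantive content is the value of this maximum (which visibly depends on $n$) and the claim that it is attained by an essentially unique configuration.

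Next I would exploit the relabeling invariance of $\tau$. Reindexing the alternatives through $Y^{-1}$ shows that $\tau(X,Y)^2$ depends only on the permutation $\pi := X \circ Y^{-1}$ of $\{1,\dots,n\}$ and equals $f(\pi) := \sum_{j=1}^n (\log \pi(j) - \log j)^2$. Hence maximizing $\tau$ over all pairs is the same as maximizing $f$ over the symmetric group $S_n$. Writing $a_k = \log k$ and expanding the square, $f(\pi) = \sum_j a_{\pi(j)}^2 + \sum_j a_j^2 - 2\sum_j a_j a_{\pi(j)} = 2\sum_{k=1}^n (\log k)^2 - 2\sum_{j=1}^n a_j a_{\pi(j)}$, because $\pi$ merely permutes the indices in the first sum. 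So maximizing $f$ is equivalent to minimizing $\sum_j (\log j)(\log \pi(j))$.

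Then I would apply the strict form of the rearrangement inequality: since $\log 1 < \log 2 < \cdots < \log n$ is strictly increasing, the sum $\sum_j (\log j)(\log \pi(j))$ is minimized, and uniquely so, by the order-reversing permutation $\pi(j) = n+1-j$. Consequently $f$ attains its maximum at the reversal alone, with value $\sum_{j=1}^n \left( \log \frac{n+1-j}{j} \right)^2$, a quantity depending on $n$. Translating back, $\tau$ is maximized precisely when $X$ and $Y$ are mutually reversed rankings, i.e. $X_i + Y_i = n+1$ for every $i$.

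The main obstacle I anticipate is one of formulation rather than of difficulty, namely stating uniqueness correctly: the pair $(X,Y)$ realizing the maximum is not literally unique, since any $Y$ is admissible once $X$ is forced to be its reversal, so uniqueness must be phrased as uniqueness of the relative permutation $X\circ Y^{-1}$ (the reversal) or, equivalently, by fixing a reference ranking and observing that its unique farthest ranking is its reverse. A minor secondary point is to verify the hypothesis of the strict rearrangement inequality: it requires the sequence $(\log j)_{j=1}^n$ to be genuinely strictly monotone, which holds because $\log$ is strictly increasing on the positive integers, so no two entries coincide and the minimizer is genuinely unique.
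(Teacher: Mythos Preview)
Your argument is correct. The paper takes a slightly different, more hands-on route: it fixes $X$, picks any pair of alternatives $i,j$ that are ordered the same way in both rankings ($X_i<X_j$ and $Y_i<Y_j$), and shows by direct calculation that swapping their positions in $Y$ strictly increases $\tau^2$; hence no such pair can exist at the maximum, which forces $Y$ to be the reversal of $X$. This is precisely the exchange argument that underlies the rearrangement inequality, carried out in situ. Your version is cleaner in two respects: the reduction via the relabeling $\pi=X\circ Y^{-1}$ collapses the two-ranking problem to a single-permutation problem before any inequality is applied, and the appeal to the strict rearrangement inequality packages the exchange step into a named classical result, delivering uniqueness of the minimizing permutation in one stroke. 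Your closing remark that uniqueness holds for the relative permutation $X\circ Y^{-1}$ rather than for the pair $(X,Y)$ is a clarification the paper leaves implicit; its proof tacitly fixes $X$ and shows the maximizing $Y$ is then unique, which is equivalent to your formulation.
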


\begin{proof}
Take two rankings $X$ and $Y$. Suppose the $i$th alternative is better than the $j$th in both rankings: $X_i < X_j$ and $Y_i < Y_j$. It will be shown that the value of $\tau$ can be higher. In this analysis the final square root in the formula is irrelevant due to its monotonically increasing property.

Change the positions of $i$ and $j$ in $Y$, denote the new ranking by $Z$. Examine $\tau(X,Y)$ and $\tau(X,Z)$. The positions of all $k \neq i,j$ alternatives is the same in $Y$ and $Z$, the sum of squares for all $k \neq i,j$ is the same. Denote $\log X_i$ by $a_i$, $\log X_j$ by $a_j$, $\log Y_i$ by $b_i$ and $\log Y_j$ by $b_j$, then some calculation shows ($a_i < a_j$ and $b_i < b_j$ from the assumption:
\begin{eqnarray*}
\tau(X,Z) - \tau(X,Y) & = & (a_i - \log Z_i)^2 + (a_j - \log Z_j)^2 - (a_i - b_i)^2 -(a_j - b_j)^2 \\
 & = & (a_i - b_j)^2 + (a_j - b_i)^2 - (a_i - b_i)^2 -(a_j - b_j)^2 \\
 & = & -2 a_i b_j - 2 a_j b_i + 2 a_i b_i + 2 a_j b_j \\
 & = & 2 b_j (a_j - a_i) + 2 b_i (a_i - a_j) \\
 & = & 2 (b_j-b_i) (a_j - a_i) \geq 0.
\end{eqnarray*}

It implies that $\tau(X,Y)$ can be higher if there exists two alternatives $i$ and $j$ that $X_i < X_j$ and $Y_i < Y_j$. It is true for every $Y$ except the opposite ranking $X^{-1}$. Suppose that $X_i = i$ for all $i$ as indexing of the alternatives is arbitrary. If there exists no $i$ and $j$ that $X_i < X_j$ and $Y_i < Y_j$, then $X_1 = 1 \Rightarrow Y_1 = n$ because of $i=1$ and $j=2,3, \dots ,n$. Similarly, $X_2 = 2 \Rightarrow Y_2 = n-1$ because of $i=1$ and $j=3,4, \dots ,n$. It leads to the final conclusion that $Y=X^{-1}$. As the number of position changes is limited by $n \choose 2$, the iteration ends, and argument of the maximum is the two opposite rankings.
\end{proof}

\begin{remark}
The value of $\tau$ depends on the base of logarithm, which corresponds to a multiplying factor. In the following, the natural logarithm will be used.
\end{remark}

\begin{remark}
The $\tau$ distance satisfies the required conditions:
\begin{itemize}
    \item{It differentiates stronger in first places: if the first and second teams change their positions, $\tau^2$ evaluates it by $2\log(2) \approx 1.3863$, if the 80th and 90th teams switch places, $\tau^2$ increases only by $2\log(\frac{9}{8}) \approx 0.2357$. It is quite significant difference, but not meaningless -- people tend to record the best teams, while they do not bother about teams with average performance.}
    \item{It is symmetric and increases if teams positioned far from each other regardless to their exact positions as it is a distance.}
    \item{The logarithmic transformation contracts the scale rather than enlarging it. After that, Euclidean distance enlarges somewhat the differences but the concavity of logarithm is dominant for large differences. Among Start, Final and the other 5 rankings most $\max \{ X_i/Y_i;Y_i/X_i \}$ ratios are near to 1, where the logarithm can be approximated with the identity function.\footnote{The maximum of these ratios is 3.875 for Bulgaria, which is 8th in Start, but 31th in Final and D-LLSM. Start and Final are the farthest rankings, but in this relation only 7 are above 2 (among them the one for Bulgaria), while other 5 equals to 2. In the $\left[ 1,2 \right]$ interval the linear approximation of the logarithmic function is acceptable.} It means that the enlarging effect of squares remains high. For example, nearly 26-26\% of the total $\tau^2$ between A-LLSM and C-LLSM is due to Armenia and China occupying the 5th and 6th positions in both rankings.}
\end{itemize}
\end{remark}

$\tau$ distances between rankings are recorded in Table \ref{table4}. The proposed rankings are more or less at the same distance from Start and Final. Start and Final rankings are also the farthest as Spearman's rank correlation coefficient show. The difference between A-LLSM, B-LLSM and C-LLSM is negligible, while D-LLSM is somewhat far (it rewards mainly the victories, not their size). The EM method for C variant is nearly the same as LLSM rankings. Some countries have again great influence on the numbers: the $\tau^2$ between B-LLSM and D-LLSM is 1.013, from which 0.2609 derives from France (10th and 6th, respectively). Notably, some countries performed much better or worse compared to the knowledge of their players reflected by Start.

\begin{table}[htbp]
\centering
\caption{$\tau$ distances between rankings}
\label{table4}
\begin{tabular}{l|ccccccc}
          & Start & Final & A-LLSM & B-LLSM & C-LLSM & D-LLSM & C-EM \\ \hline
    Start & 0     & 4.008 & 2.928 & 2.926 & 2.909 & 3.150 & 2.923 \\
    Final & 4.008 & 0     & 2.806 & 2.817 & 2.795 & 2.788 & 2.806 \\
    A-LLSM & 2.928 & 2.806 & 0     & 0.489 & 0.359 & 0.672 & 0.383 \\
    B-LLSM & 2.926 & 2.817 & 0.489 & 0     & 0.262 & 1.007 & 0.259 \\
    C-LLSM & 2.909 & 2.795 & 0.359 & 0.262 & 0     & 0.896 & 0.142 \\
    D-LLSM & 3.150 & 2.788 & 0.672 & 1.007 & 0.896 & 0     & 0.913 \\
    C-EM & 2.923 & 2.806 & 0.383 & 0.259 & 0.142 & 0.913 & 0 \\
\end{tabular}
\end{table}

\begin{remark}
$\tau$ has a maximal value of $\sqrt{\sum_{i=1}^n \left[ \log(i) -\log (n+1-i) \right]^2}$, here $\tau_{\max}^{149} \approx 21.12$. It makes possible to normalize it, however, we have rankings with the same length, so it seems to be unnecessary. The theoretical maximum reflects the proximity of different rankings with respect to the contracting effect of logarithm dominant for high $\max \{ X_i/Y_i;Y_i/X_i \}$ ratios.
\end{remark}

The relation of different rankings could be examined by other statistical tools, as well. The position of teams in Final and A-LLSM rankings are drawn on Figure \ref{fig2} with linear regression analysis (the coefficient of explanatory variable $x$ is equal to Spearman's correlation coefficient between Final and A-LLSM rankings). Due to the similarity of the proposed rankings, the substitution of A-LLSM with another rankings calculated from the incomplete pairwise comparison matrix results in a similar chart. There is a remarkable tendency: teams with lower TB1 (match points), but higher TB2 (Sonneborn-Berger points) tend to achieve better positions than teams with the opponent performance benefit from the official lexicographical order.\footnote{See  footnote \ref{footnote5}.} It derives from lack of continuity of the lexicographical order.\footnote{It could be proved by simple intuition: let $a$ an alternative with values $(a_1,a_2,a_3 \dots , a_n)$ according to the $n$ criteria and choose a sequence of alternatives, where $b_k$ has ratings $(a_1-1/k,a_2+1,a_3+1, \dots , a_n+1)$. Let the limit of the sequence $b$, that is an alternative with values $(a_1,a_2+1,a_3+1 \dots , a_n+1)$. Then $a$ is better than $b_k$ for all $k$ according to the lexicographical order (if it prefers higher values), but $b$ is better than $a$. It contradicts to continuity, as the set of alternatives worse than $a$ is not closed.} For example, Zambia is officially the 47th, but at most 89th in LLSM and EM rankings, while Georgia stands at the 30th position in Final and is at least the 16th according to the proposed solutions. Other examples are given in Table \ref{table10}.

\begin{figure}[htbp]
\centering
\includegraphics[width = 10cm]{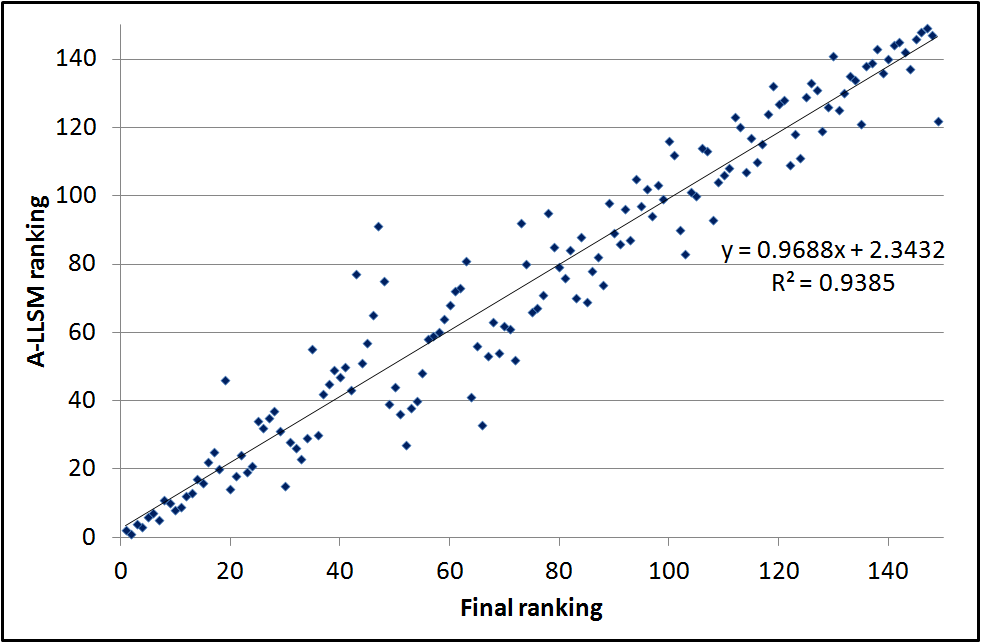} 
\caption{Relation of Final and A-LLSM rankings}
\label{fig2}
\end{figure}

Another important issue is the allocation of first places. In this respect, the results are robust as the first four positions are occupied by Ukraine, Russia 1, Hungary and Israel in Final and all of the proposed rankings.

\section{Some characteristics of the example}
\label{Characteristics}

Despite the similar rankings derived from LLSM vectors, the priority vectors are different for the 4 variants; Table \ref{table7} shows some details. The means of optimal weights are equal due to the normalization, while standard deviations confirm intuition: it is the largest in B variant which uses the widest scale and the smallest in 'narrow' C variant. For B and D variants the weight of the winner is similar, but its superiority (Max / Min ratio) to the last team is more than double for B. However, a factor of 250 or 500 refers to significant differences among teams.

\begin{table}[htbp]
\centering
\caption{Features of LLSM vectors}
\label{table7}
\begin{tabular}{l|cccc}
                        & A-LLSM & B-LLSM & C-LLSM & D-LLSM \\ \hline
    Maximum             & 0.0336 & 0.0422 & 0.0210 & 0.0417 \\
    Minimum             & 0.0002 & 0.0001 & 0.0007 & 0.0002 \\
    Max / Min           & 158.62 & 576.08 & 28.02 & 249.83 \\
    Mean                & 0.0067 & 0.0067 & 0.0067 & 0.0067 \\
    Standard deviation  & 0.0068 & 0.0082 & 0.0045 & 0.0078 \\
    Average win's ratio & 3.2650 & 4.5300 & 2.1325 & 3.7291 \\
    Power               & 4.2818 & 4.2074 & 4.4012 & 4.1946 \\
\end{tabular}
\end{table}

The ratio between the maximal and minimal weights has remarkable implications highlighted in the last 2 rows of Table \ref{table7}. Average win's ratio corresponds to the mean of wins in each coding:
\[
\sum_{i=2.5, 3, 3.5, 4} \frac{\text{Number of matches where winner's game points is } i * \text{Ratio corresponding to result } i}{\text{Number of matches without draws}}
\]
For example, it is $(220*2+194*4+137*6+132*8)/683$ for B variant. Finally, power equals to $\log(\text{Max / Min})/\log(\text{Average win's ratio})$, which reflects a kind of 'order of magnitude' in the tournament. For example, in A variant the factor between the first and the last team is 150-fold, while a standard victory in a match corresponds to a pairwise comparison ratio of 3.2. As $3.2^4 \approx 150$, a 'virtual' chain of 5 teams should exist, where each team defeated the following with this average difference. However, if one of the first and last teams played against each other, the ratio as a known element in the incomplete pairwise comparison matrix is at most 5 according to a 4:0 win in A variant. It implies participants should be put into 4-5 groups. Figure \ref{fig6} presents the weights derived from LLSM vectors.

\begin{figure}[htbp]
\centering
\includegraphics[width = 10cm]{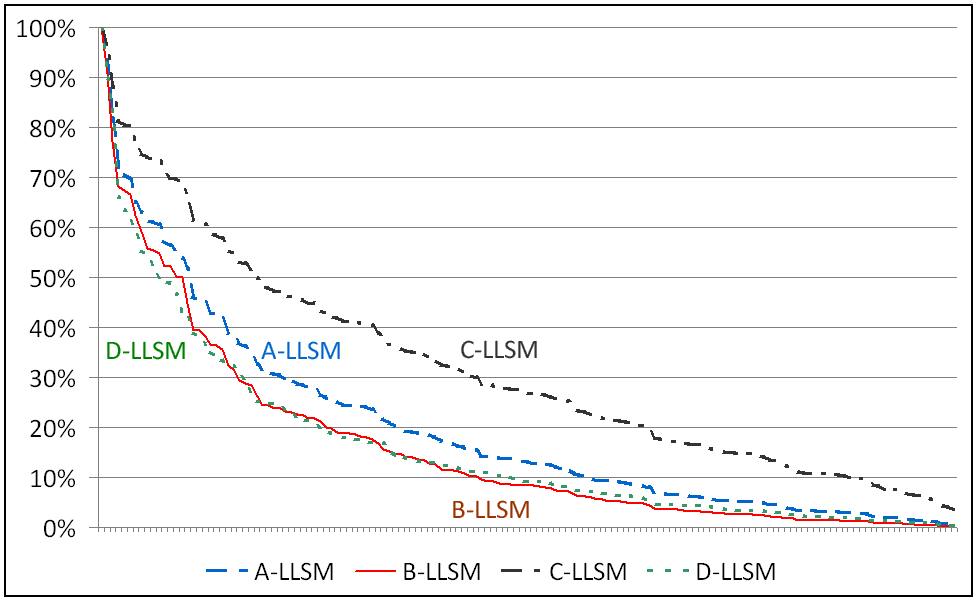} 
\caption{Weights of LLSM vectors relative to the first}
\label{fig6}
\end{figure}

Due to the pairing method, the classification of teams partly happens. The structure of known elements in the pairwise comparison matrix shows an impressive picture; Figure \ref{fig4} indicates them with a filled box covering approximately 7\% of the whole area, which is the ratio of known elements relative to all possible matchings. On the two charts matrix rows (teams) are ordered according to Final (a, left) and A-LLSM (b, right) rankings, respectively. Known elements representing matches knit around the diagonal, this effect is more stronger in the second case. It is affirmed by some calculations: the average difference between the rank of two teams played against each other is 28.70 in Final and 25.32 in A-LLSM, while the median is 22.5 and 19. Thus matches were taken place between contestants with a similar performance, the matching algorithm operates not randomly. Despite the fact that the pairing method was not examined, the above numbers suggest that the proposed pairing based on pairwise comparisons is better regarding the classification of teams -- if the number of matches is limited, they should be played between participants whose internal ranking is difficult to decide.

\begin{figure}[htbp]
\centering
\resizebox{\textwidth}{!}{
\includegraphics{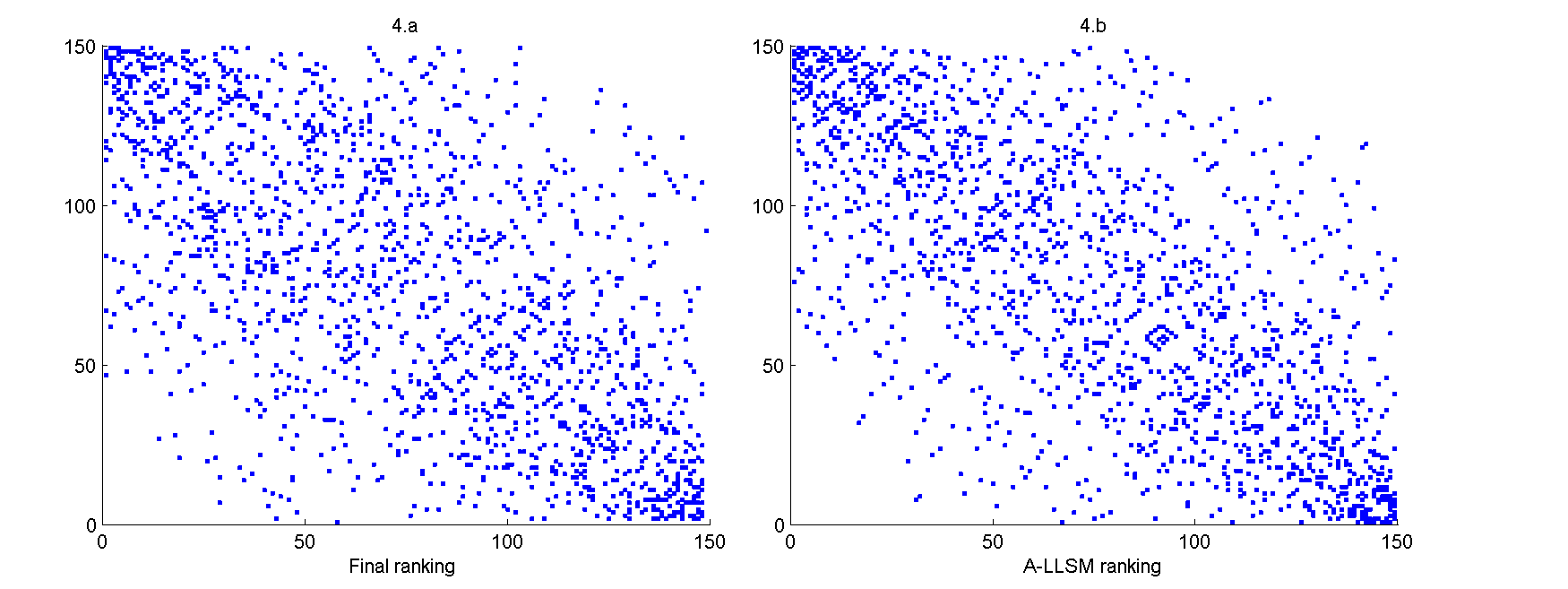} }
\caption{Map of known elements in the pairwise comparison matrix, teams sorted by positions according to Final (a, left) and A-LLSM (b, right) rankings}
\label{fig4}
\end{figure}

Amid these circumstances, it is favourable that the number of rounds in the tournament remains small compared to the number of participants, as it makes possible the appearance of significant ratios not permitted by the transformation of match results into pairwise comparison ratios. It raises the question whether all countries should participate in the same tournament which is not the case in almost all team sports like soccer or ice hockey.

\section{Visualization}
\label{Visualization}

Rankings based on pairwise comparison matrices have revealed some interesting facts about the performance of certain teams, nonetheless, the long lists do not give an impressive picture about their relation. The $\tau$ distance seems to be a good starting point, as it makes possible to plot $n$ rankings in a $n-1$ dimensional space. In addition, it is possible to decrease number of dimensions in order to give a better overview if it goes without significant loss of information. Multidimensional Scaling (MDS) is a statistical method in information visualization for exploring similarities or dissimilarities in data; a basic book in this field is \cite{KruskalWish1978}. A classical application of MDS is to draw cities on a map from the matrix consisting of their air distance.

For sufficiently small $n$, it requires only 1 or 2 dimensions to visualize all points representing different rankings. The distance matrix of the 7 rankings is still calculated in Section \ref{Rankings}, but the applied software (SPSS v18.0) requires at least 10 cases to be evaluated. Thus, regarding former observations, 3 further rankings (Sonneborn-Berger, Buchholz, Mix) were defined on the basis of different components of the original lexicographical order.\footnote{See footnote \ref{footnote5}.
\begin{itemize}
	\item{Sonneborn-Berger: a lexicographical order based on the Sonneborn-Berger points (TB2) of teams, then match points (TB1) and game points (TB3). It still gives a complete order.}
	\item{Buchholz: another lexicograhical order firstly based on the product of Buchholz points (TB4) and match points (TB1) divided by the number of matches. The idea is that the Buchholz points reflect only the force of opponents, so it should be modified. It is pointless to multiply it with game points (TB3) as it gives almost the Sonneborn-Berger points (TB2) and teams are interested in the increase of their match points (TB1). The subsequent tie-breaking rule is number of match points (TB1), then Sonneborn-Berger points (TB2).}
	\item{Mix: a composite index based on Sonneborn-Berger (TB2) and Buchholz points (TB4). In order to measure them in a similar scale, Buchholz points are multiplied by a correcting factor:
\[
F = \frac{3*\text{Number of wins}+2*\text{Number of draws}+1*\text{Number of losses}}{\text{Number of matches}}
\]
Then Sonneborn-Berger (with an average of 203) and modified Buchholz points (average 232) are added. The unique tie-break between Scotland and Faroe Islands is decided for the latter because of higher Sonneborn-Berger points.}
\end{itemize}

The 3 rankings partly moderate the oddities of the final ranking. For example, the officially 47th Zambia is 72nd, 75th and 79th in the Sonneborn-Berger, Buchholz and Mix rankings, while the 64th Germany is 37th, 60th and 45th, respectively.}

MDS maps the original distances in interval, ratio or ordinal scales; the most general interval scale were applied. Here $\delta$ discrepancies on the reduced-dimension map are related to the original $d$ distances by a linear function: $\delta = a + bd$, which means it is indifferent to multiplying the distances by a constant factor caused by choosing a different base of logarithm for $\tau$. The goodness of mapping (the information loss derives from dimension reduction) is measured by Kruskal's Stress and RSQ.

The pairwise $\tau$ distances 10 rankings could be mapped appropriately in two dimensions, but one dimension is not enough. The coordinates on the MDS map has no further meaning, distances on the map reflects the original distances as well as possible, while the mean of $x$ and $y$ coordinates is 0. Kruskal's Stress is 0.1431, a middle-strength relation. RSQ reaches 0.9468, approximately 94.5\% of variance of the scaled data can be accounted for by the MDS procedure. The reduced map is plotted in Figure \ref{figure3}.

\begin{figure}[htbp]
\centering
\includegraphics[width = 10cm]{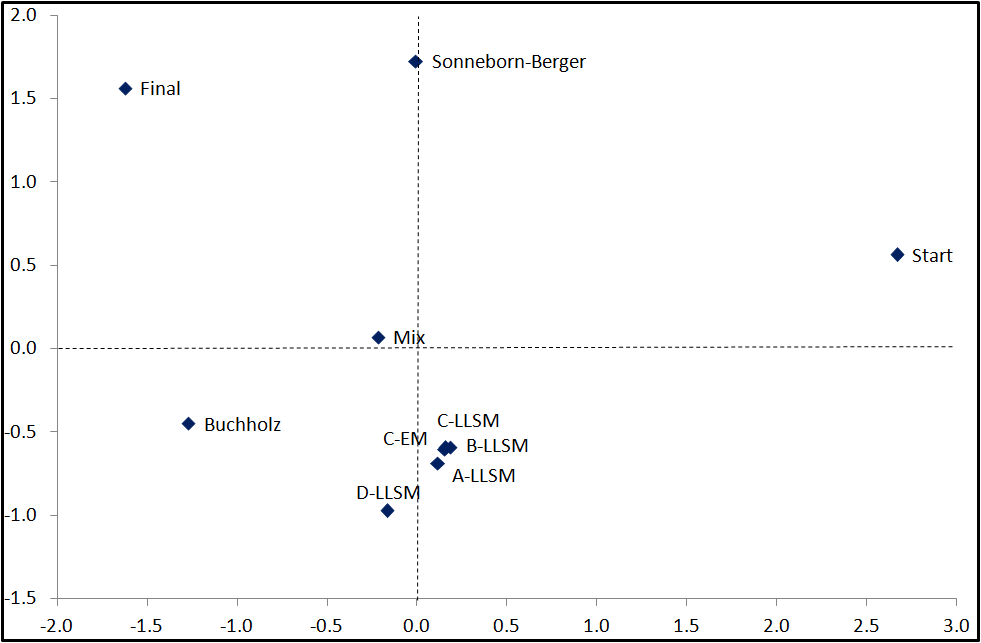} 
\caption{Rankings on a two-dimensional map by $\tau$ distances}
\label{figure3}
\end{figure}

Start and Final rankings are the farthest from each other, the 5 rankings derived from the incomplete pairwise comparison matrices are in a small cluster approximately at the same distance from Start and Final. The proposed rankings are almost the same, only D-LLSM differs from the others marginally (it gives a great weight to all wins, but hardly rewards its size). The 3 newly defined rankings (Sonneborn-Berger, Buchholz and Mix) are nearer to the Final than to the Start which is reassuring, while Buchholz and Mix are still nearer to the proposed rankings. It backs the former observation that methods based on pairwise comparison matrices overweighs TB4 (Buchholz points) to TB2 (Sonneborn-Berger points), while the official lexicographical order prefers teams with higher TB2 (Sonneborn-Berger points).\footnote{See footnote \ref{footnote5}.} It is not by chance, as TB4 (Buchholz points) rather reflects the performance of opponents, the key idea beyond pairwise comparison matrices.

\section{Conclusion}
\label{Conclusion}

This paper presents an alternative method to determine the final ranking of teams participated in the 39th Chess Olympiad 2010 Open tournament based on incomplete pairwise comparison matrices. In lack of former experience, 4 arbitrary scales were used to transform the match results into ratios. Two well-known techniques are used, the Eigenvector Method (EM) and the Logarithmic Least Squares Method (LLSM). The decision between them is far from trivial, while other methods deserve testing, like the one given by Fedrizzi and Giove in \cite{Fedrizzi2007}.

Results show that in some cases the incompleteness of the pairwise comparison matrix has favourable effects if it has a special structure, namely, if ordering on the priority vector derived from the matrix, known elements are located near the diagonal with high probability: some ratios given by the optimal completion can significantly exceed the original scale.

The chosen variant for transforming match results into ratios affects the absolute priorities significantly, but final rankings served by weights are relatively robust. It was confirmed by the application of a distance based on the asymmetry of committed mistakes in rankings, however, it was strongly influenced by some particular teams. The construction of better indices is a topic of future research.

Matches with the same result were represented by the same ratios. It means some simplification as a 4:0 win in the final round against a strong opponent seems to be more valuable than a similar result in the first rounds against underdogs. However, the official ranking does not take into account the opponent's strength in awarding victories, and a priori it is not reasonable to transform these outcomes differently. Nevertheless, further investigations are needed how this aspect could be incorporated into the pairwise comparison method.

Alternative proposals revealed some oddities of the current FIDE olympiad tie-breaking rules. As there is no commonly accepted ranking method in chess, other techniques are worth examining. Visualization implies that certain lexicographical orders, despite the lack of continuity, approximates the results of the proposed method well. While it is really impossible to find a perfect final ranking for a similarly complex Swiss-system tournament, it was justified that in some cases the use of pairwise comparison matrices give robust and intuitively better results than currently used ranking techniques. This approach can be extended to other sport championships.

\section*{Acknowledgements}

The author is very grateful to the anonymous reviewers for their valuable comments and suggestions. The research was supported by OTKA grant K-77420.

\newpage

\section*{Appendix}

\footnotesize
\begin{center}
\topcaption{Positions of teams according to different rankings}
\label{table10}
\xentrystretch{-0.13}
\tablefirsthead{\textbf{Team} & \textbf{Start} & \textbf{Final} & \textbf{A-LLSM} & \textbf{B-LLSM} & \textbf{C-LLSM} & \textbf{D-LLSM} \\ \hline}
\tablehead{\multicolumn{7}{c}{\tablename \  \thetable \ -- continued from previous page} \\
\textbf{Team} & \textbf{Start} & \textbf{Final} & \textbf{A-LLSM} & \textbf{B-LLSM} & \textbf{C-LLSM} & \textbf{D-LLSM} \\ \hline}
\tabletail{\hline \multicolumn{7}{r}{Continued on next page} \\ \hline}
\tablelasttail{\hline \hline}
\begin{xtabular}{lcccccc}
    Ukraine & 2     & 1     & 2     & 2     & 2     & 2 \\
    Russia 1 & 1     & 2     & 1     & 1     & 1     & 1 \\
    Israel & 11    & 3     & 4     & 4     & 4     & 4 \\
    Hungary & 5     & 4     & 3     & 3     & 3     & 3 \\
    China & 3     & 5     & 6     & 5     & 5     & 7 \\
    Russia 2 & 4     & 6     & 7     & 7     & 7     & 9 \\
    Armenia & 6     & 7     & 5     & 6     & 6     & 5 \\
    Spain & 16    & 8     & 11    & 9     & 10    & 10 \\
    United States of America & 9     & 9     & 10    & 11    & 11    & 8 \\
    France & 10    & 10    & 8     & 10    & 9     & 6 \\
    Poland & 15    & 11    & 9     & 8     & 8     & 11 \\
    Azerbaijan & 7     & 12    & 12    & 12    & 12    & 13 \\
    Russia 3 & 14    & 13    & 13    & 14    & 13    & 12 \\
    Belarus & 35    & 14    & 17    & 17    & 17    & 17 \\
    Netherlands & 13    & 15    & 16    & 16    & 16    & 14 \\
    Slovakia & 22    & 16    & 22    & 22    & 22    & 22 \\
    Brazil & 24    & 17    & 25    & 26    & 25    & 24 \\
    India & 19    & 18    & 20    & 21    & 20    & 20 \\
    Denmark & 44    & 19    & 46    & 47    & 47    & 42 \\
    Czech Republic & 17    & 20    & 14    & 13    & 14    & 15 \\
    Italy & 30    & 21    & 18    & 19    & 18    & 18 \\
    Greece & 25    & 22    & 24    & 23    & 24    & 25 \\
    Cuba  & 18    & 23    & 19    & 18    & 19    & 19 \\
    England & 12    & 24    & 21    & 20    & 21    & 21 \\
    Argentina & 26    & 25    & 34    & 34    & 34    & 33 \\
    Estonia & 48    & 26    & 32    & 31    & 32    & 32 \\
    Kazakhstan & 41    & 27    & 35    & 36    & 36    & 35 \\
    Moldova & 31    & 28    & 37    & 38    & 37    & 36 \\
    Iran  & 38    & 29    & 31    & 32    & 31    & 30 \\
    Georgia & 20    & 30    & 15    & 15    & 15    & 16 \\
    Bulgaria & 8     & 31    & 28    & 28    & 28    & 31 \\
    Croatia & 28    & 32    & 26    & 25    & 26    & 26 \\
    Serbia & 21    & 33    & 23    & 24    & 23    & 23 \\
    Sweden & 34    & 34    & 29    & 30    & 30    & 28 \\
    Lithuania & 39    & 35    & 55    & 55    & 55    & 58 \\
    Slovenia & 29    & 36    & 30    & 29    & 29    & 29 \\
    Canada & 53    & 37    & 42    & 42    & 42    & 41 \\
    Austria & 45    & 38    & 45    & 43    & 46    & 48 \\
    Russia 4 & 52    & 39    & 49    & 49    & 49    & 51 \\
    Iceland & 54    & 40    & 47    & 46    & 45    & 44 \\
    Egypt & 40    & 41    & 50    & 50    & 50    & 50 \\
    Montenegro & 56    & 42    & 43    & 44    & 43    & 43 \\
    Qatar & 55    & 43    & 77    & 77    & 76    & 79 \\
    Peru  & 46    & 44    & 51    & 52    & 51    & 49 \\
    Turkey & 50    & 45    & 57    & 58    & 58    & 54 \\
    Uruguay & 74    & 46    & 65    & 66    & 65    & 65 \\
    Zambia & 121   & 47    & 91    & 89    & 91    & 92 \\
    ICSC  & 72    & 48    & 75    & 76    & 75    & 73 \\
    Uzbekistan & 33    & 49    & 39    & 39    & 39    & 40 \\
    Philippines & 37    & 50    & 44    & 45    & 44    & 46 \\
    Norway & 23    & 51    & 36    & 35    & 35    & 37 \\
    Vietnam & 27    & 52    & 27    & 27    & 27    & 27 \\
    Chile & 51    & 53    & 38    & 37    & 38    & 38 \\
    Colombia & 57    & 54    & 40    & 40    & 40    & 39 \\
    Australia & 49    & 55    & 48    & 48    & 48    & 47 \\
    Former YUG Rep of Macedonia & 43    & 56    & 58    & 57    & 57    & 61 \\
    Albania & 68    & 57    & 59    & 59    & 59    & 57 \\
    Singapore & 73    & 58    & 60    & 60    & 60    & 55 \\
    Finland & 60    & 59    & 64    & 64    & 64    & 63 \\
    Belgium & 71    & 60    & 68    & 69    & 69    & 70 \\
    United Arab Emirates & 88    & 61    & 72    & 70    & 71    & 72 \\
    Pakistan & 123   & 62    & 73    & 74    & 74    & 71 \\
    IPCA  & 70    & 63    & 81    & 81    & 81    & 80 \\
    Germany & 42    & 64    & 41    & 41    & 41    & 45 \\
    Switzerland & 47    & 65    & 56    & 54    & 54    & 62 \\
    Bosnia \& Herzegovina & 32    & 66    & 33    & 33    & 33    & 34 \\
    Indonesia & 67    & 67    & 53    & 53    & 53    & 56 \\
    Kyrgyzstan & 77    & 68    & 63    & 61    & 63    & 64 \\
    Latvia & 58    & 69    & 54    & 56    & 56    & 53 \\
    Russia 5 & 61    & 70    & 62    & 63    & 62    & 60 \\
    Mongolia & 66    & 71    & 61    & 62    & 61    & 59 \\
    Mexico & 36    & 72    & 52    & 51    & 52    & 52 \\
    Bangladesh & 82    & 73    & 92    & 92    & 92    & 89 \\
    South Africa & 81    & 74    & 80    & 80    & 80    & 78 \\
    Portugal & 59    & 75    & 66    & 65    & 66    & 66 \\
    Turkmenistan & 69    & 76    & 67    & 68    & 68    & 69 \\
    Jordan & 79    & 77    & 71    & 73    & 72    & 67 \\
    Libya & 105   & 78    & 95    & 95    & 94    & 95 \\
    Paraguay & 84    & 79    & 85    & 85    & 85    & 84 \\
    Faroe Islands & 83    & 80    & 79    & 79    & 79    & 75 \\
    Venezuela & 64    & 81    & 76    & 78    & 78    & 74 \\
    Costa Rica & 80    & 82    & 84    & 84    & 84    & 83 \\
    Scotland & 63    & 83    & 70    & 72    & 70    & 68 \\
    Yemen & 85    & 84    & 88    & 90    & 89    & 85 \\
    Ecuador & 65    & 85    & 69    & 67    & 67    & 76 \\
    Tajikistan & 62    & 86    & 78    & 75    & 77    & 81 \\
    Andorra & 89    & 87    & 82    & 82    & 82    & 82 \\
    Ireland & 75    & 88    & 74    & 71    & 73    & 77 \\
    Algeria & 91    & 89    & 98    & 97    & 98    & 99 \\
    Dominican Republic & 87    & 90    & 89    & 87    & 87    & 93 \\
    New Zealand & 92    & 91    & 86    & 86    & 86    & 87 \\
    Malaysia & 86    & 92    & 96    & 96    & 96    & 96 \\
    Thailand & 94    & 93    & 87    & 88    & 88    & 91 \\
    Panama & 107   & 94    & 105   & 105   & 104   & 101 \\
    Barbados & 96    & 95    & 97    & 98    & 97    & 97 \\
    Japan & 101   & 96    & 102   & 102   & 102   & 103 \\
    Luxembourg & 90    & 97    & 94    & 94    & 95    & 94 \\
    Cyprus & 109   & 98    & 103   & 103   & 103   & 100 \\
    Guatemala & 103   & 99    & 99    & 100   & 99    & 98 \\
    Malta & 106   & 100   & 116   & 116   & 116   & 114 \\
    Nigeria & 139   & 101   & 112   & 114   & 113   & 109 \\
    IBCA  & 78    & 102   & 90    & 91    & 90    & 88 \\
    Iraq  & 76    & 103   & 83    & 83    & 83    & 86 \\
    Sri Lanka & 115   & 104   & 101   & 101   & 101   & 104 \\
    Jamaica & 97    & 105   & 100   & 99    & 100   & 107 \\
    Uganda & 127   & 106   & 114   & 115   & 114   & 113 \\
    Nepal & 114   & 107   & 113   & 110   & 112   & 116 \\
    Puerto Rico & 100   & 108   & 93    & 93    & 93    & 90 \\
    Lebanon & 99    & 109   & 104   & 104   & 105   & 102 \\
    Monaco & 95    & 110   & 106   & 106   & 106   & 106 \\
    Honduras & 125   & 111   & 108   & 108   & 108   & 112 \\
    Palestine & 134   & 112   & 123   & 123   & 122   & 124 \\
    Korea & 116   & 113   & 120   & 120   & 120   & 120 \\
    Bolivia & 104   & 114   & 107   & 107   & 107   & 105 \\
    Trinidad \& Tobago & 108   & 115   & 117   & 117   & 117   & 115 \\
    Botswana & 102   & 116   & 110   & 112   & 110   & 110 \\
    Bahrain & 112   & 117   & 115   & 113   & 115   & 117 \\
    Mauritius & 110   & 118   & 124   & 122   & 123   & 126 \\
    Chinese Taipei & 131   & 119   & 132   & 132   & 132   & 132 \\
    Kenya & 133   & 120   & 127   & 126   & 126   & 127 \\
    Aruba & 120   & 121   & 128   & 128   & 128   & 130 \\
    Wales & 93    & 122   & 109   & 111   & 111   & 108 \\
    Jersey & 113   & 123   & 118   & 118   & 118   & 111 \\
    Angola & 98    & 124   & 111   & 109   & 109   & 118 \\
    Mali  & 145   & 125   & 129   & 129   & 129   & 128 \\
    Namibia & 129   & 126   & 133   & 133   & 133   & 134 \\
    Malawi & 141   & 127   & 131   & 131   & 131   & 131 \\
    Ethiopia & 132   & 128   & 119   & 119   & 119   & 119 \\
    Hongkong & 124   & 129   & 126   & 127   & 127   & 123 \\
    Guernsey & 128   & 130   & 141   & 142   & 142   & 141 \\
    Mauritania & 146   & 131   & 125   & 124   & 124   & 125 \\
    Surinam & 111   & 132   & 130   & 130   & 130   & 129 \\
    Macau & 122   & 133   & 135   & 135   & 135   & 136 \\
    Mozambique & 130   & 134   & 134   & 134   & 134   & 133 \\
    Madagascar & 140   & 135   & 121   & 121   & 121   & 122 \\
    Netherlands Antilles & 118   & 136   & 138   & 139   & 139   & 138 \\
    Cameroon & 144   & 137   & 139   & 137   & 137   & 139 \\
    Sao Tome and Principe & 138   & 138   & 143   & 143   & 143   & 142 \\
    Haiti & 135   & 139   & 136   & 136   & 136   & 137 \\
    Ghana & 137   & 140   & 140   & 140   & 140   & 140 \\
    Bermuda & 126   & 141   & 144   & 144   & 144   & 144 \\
    Sierra Leone & 148   & 142   & 145   & 145   & 145   & 145 \\
    Papua New Guinea & 117   & 143   & 142   & 141   & 141   & 143 \\
    San Marino & 119   & 144   & 137   & 138   & 138   & 135 \\
    Burundi & 143   & 145   & 146   & 146   & 146   & 146 \\
    Rwanda & 142   & 146   & 148   & 148   & 148   & 148 \\
    U.S. Virgin Islands & 149   & 147   & 149   & 149   & 149   & 149 \\
    Seychelles & 136   & 148   & 147   & 147   & 147   & 147 \\
    Senegal & 147   & 149   & 122   & 125   & 125   & 121 \\
\end{xtabular}
\end{center}


\end{document}